\documentclass[letterpaper, 10 pt, conference]{ieeeconf}
\IEEEoverridecommandlockouts
\usepackage{graphicx}      
\usepackage{subfig}
\usepackage{url}
\usepackage{hyperref}
\renewcommand{\baselinestretch}{0.9}

\usepackage{bm}

\usepackage{tikz-cd}
\usepackage{tikz}
\usepackage{circuitikz}
\usepackage{tikz-3dplot}
\tdplotsetmaincoords{60}{6}
\usetikzlibrary{3d}
\usetikzlibrary{arrows.meta, positioning, calc}
\usepackage{amsmath}
\allowdisplaybreaks
\usepackage[ruled,vlined]{algorithm2e}
\usepackage{algpseudocode}
\usepackage{amssymb}
\usepackage{mathtools}
\usepackage{amsfonts}
\usepackage{amsmath}
\usepackage{booktabs}
\usepackage{siunitx}
\usepackage{pifont}

\usepackage{mathrsfs}

\newtheorem{assumption}{Assumption}

\newtheorem{remark}{Remark}
\newtheorem{theorem}{Theorem}
\newtheorem{lemma}{Lemma}

\usepackage{comment}

\usepackage{caption}

\title{
Distributed State Estimation for Discrete-time LTI Systems: \\the Design Trilemma and a Novel Framework
}
\author{Ruixuan~Zhao, Guitao~Yang, James~Fleming, and Boli~Chen
\thanks{R. Zhao and B. Chen are with the Department of Electronic and Electrical Engineering, University College London, London, UK \tt\small(ruixuan.zhao.22@ucl.ac.uk; boli.chen@ucl.ac.uk).}
\thanks{G. Yang and J. Fleming are with the Wolfson School of Mechanical, Electrical and Manufacturing Engineering, Loughborough University, Loughborough, UK \tt\small(g.yang@lboro.ac.uk; j.fleming@lboro.ac.uk).}
}

\begin{document}

\maketitle

\begin{abstract}                
  With the advancement of IoT technologies and the rapid expansion of cyber-physical systems, there is increasing interest in distributed state estimation, where multiple sensors collaboratively monitor large-scale dynamic systems.
  Compared with its continuous-time counterpart, a discrete-time distributed observer faces greater challenges, as it cannot exploit high-gain mechanisms or instantaneous communication. Existing approaches depend on three tightly coupled factors: (i) system observability, (ii) communication frequency and dimension of the exchanged information, and (iii) network connectivity. However, the interdependence among these factors remains underexplored. This paper identifies a fundamental trilemma among these factors and introduces a general design framework that balances them through an iterative semidefinite programming approach. As such, the proposed method mitigates the restrictive assumptions present in existing works. The effectiveness and generality of the proposed approach are demonstrated through a simulation example.
\end{abstract}

\smallskip

\section{Introduction}
State estimation plays a crucial role in the field of control systems design, as it serves to monitor system status and facilitate decision-making \cite{zhao2020roles}.
However, as modern systems continue to scale up, sensor measurements are increasingly distributed across wide geographical areas, making accurate estimation of the overall system state a significant challenge. Fortunately, with the rapid development of IoT devices, it is feasible to leverage communication among sensor nodes to estimate the state in a joint manner. For many, a centralized estimation scheme might be the initial intuition, i.e., first selecting a ``central node'' that collects all measurements via communication, then applying conventional observer design to estimate the state at the central node, and finally disseminating the estimated state to the remaining nodes. Despite the simplicity of this concept, the design suffers from communication delays in the estimated state information, as well as the vulnerability of the scheme (e.g., if the central node becomes faulty) \cite{liu2021dynamic}. More recently, distributed observer design has gained attention, as it overcomes the limitations of centralized approaches: it enables all nodes to jointly reconstruct the system state \cite{yang2022large}. Furthermore, the increasing ubiquity of digital sensing, computation, and communication has made discrete-time systems more prevalent in industry, which has motivated numerous discrete-time distributed observers \cite{zhao2025multi,zhao2026DUIOGA}.

Building on the above motivations, this paper investigates the problem of distributed state estimation for discrete-time systems. We revisit the key assumptions underlying existing observer designs (observability, communication, and connectivity), analyze their respective roles, and formalize the inherent trade-offs among them. This unified perspective reveals the fundamental limitations embedded in current design paradigms. Motivated by these insights, we introduce a novel design framework that preserves conceptual simplicity while effectively relaxing the restrictive assumptions of prior approaches.

\paragraph*{Notation}
The sets of real, positive real, and natural numbers are denoted by $\mathbb{R}$, $\mathbb{R}_{>0}$, and $\mathbb{N}$, respectively. The identity matrix of size $n$ is denoted by $I_{n}$. The all-zeros matrix of size $n\times m$ is denoted by $\mathbf{0}_{n\times m}$, and simply by $\mathbf{0}$ when the dimensions are clear from context. $\mathbf{1}_n$ is defined as a $n$-dimensional vector with all $1$ elements. The symbol $\|\cdot\|$ denotes the standard Euclidean norm, and $\otimes$ denotes the Kronecker product. For a square matrix $P$, the notation $P\succ0$ (or $P\succeq0$) signifies that $P$ is positive definite (or positive semi-definite), and $\mathrm{Tr}(P)$ represents the trace of $P$. Given matrices $P_1,P_2,\ldots,P_N$, ${\rm diag}(P_1,P_2,\ldots,P_N)$ constructs a block diagonal matrix. Furthermore, ${\rm col}(P_1,P_2,\ldots,P_N)$ denotes the stacked matrix $[P_1^\top,P_2^\top,\cdots,P_N^\top]^\top$.

\section{Problem Statement}
The distributed state estimation problem for a discrete-time linear time-invariant (LTI) system is formulated as follows.
Consider the LTI dynamics
\begin{equation}\label{eq:sys}
    \begin{aligned}
        x(t+1) &= A x(t) + B u(t) \\
        y_i(t) &= C_ix(t), \, i= 1, 2, \dots, N
    \end{aligned}
\end{equation}
where $t\in\mathbb{N}$ denotes the discrete sampling step.
The vectors $x \in \mathbb R^{n_x}$ and $u \in \mathbb R^{n_u}$ represent the system state and input, respectively, while $y_i \in \mathbb R^{n_{y_i}}$ denotes the local measurement available at Node~$i$. The sensor nodes are interconnected through a directed communication graph $\mathcal{G} = (\mathbf{N}, \mathcal{E}, \mathcal{A})$. Here $\mathbf{N} = \{1, 2, \dots, N\}$ is the set of sensor nodes, $\mathcal{E} \subseteq \mathbf{N} \times \mathbf{N}$ is the set of directed communication links and $\mathcal{A} = [a_{ij}] \in \mathbb{R}^{N \times N}$ is the adjacency matrix. Here, $a_{ij} = 1$ if $(i, j) \in \mathcal{E}$ (indicating that node $j$ can receive information from node $i$), and $a_{ij} = 0$ otherwise. The neighbor set of node $i$ is denoted by $\mathcal{N}_i=\{j|(j,i)\in\mathcal{E}\}$. To maintain simplicity and focus within the scope of this paper, we assume that the input $u$ is fully accessible to all nodes.

The objective is to design a distributed observer scheme $\mathcal{O} = \{ \mathcal{O}_i \}_{i \in \mathbf N}$, $\forall i \in \mathbf{N}$, such that the system state $x(t)$ can be asymptotically estimated at each sensor node, 
\begin{equation}\label{eq:full-converge}
    \lim_{t \to +\infty} \lVert x(t) - \hat x_i(t) \lVert = 0, \ \forall i \in \mathbf{N}.
\end{equation}    
A typical structure of a distributed state estimation scheme is depicted in Fig.~\ref{fig:observerframework}.
\begin{figure}[htp!]
    \centering
    \scalebox{0.65}{\begin{tikzpicture}[tdplot_main_coords]
    \def\off{18}
    \def\N{5}
    \def\R{2.5}

    \pgfmathparse{360/\N}
    \edef\step{\pgfmathresult}

    \colorlet{net}{teal}
    \tikzset{comm/.style = {color=cyan, very thick, dash pattern=on 4pt off 1.5pt}}
    \foreach \x in {1,...,\N} {%
        \pgfmathparse{\R*cos(\x*\step+\off)}
        \let\xcoord\pgfmathresult
        \pgfmathparse{\R*sin(\x*\step+\off)}
        \let\ycoord\pgfmathresult
        \begin{scope}[canvas is xy plane at z=0,transform shape]
        \node [circle, 
            draw,
            color=white, 
            fill=white, 
            text=black, 
            very thick,
            inner sep=6pt] 
            (N\x) at (\xcoord, \ycoord, 0) {};
        \end{scope}
    }

\def\xO{0}
\def\yO{0}
\def\zO{-0.8}
\def\RR{2.5}
\def\HH{1}

\draw[densely dashed, domain=0:180, samples=37, variable=\t]
  plot ({\xO+\RR*cos(\t)}, {\yO+\RR*sin(\t)}, {\zO});

\draw[thick,
  preaction={draw=white, line width=2\pgflinewidth},
  domain=0:180, samples=37, variable=\t]
  plot ({\xO+\RR*cos(\t)}, {\yO+\RR*sin(\t)}, {\zO+\HH});

\draw[thick,
  preaction={draw=white, line width=2\pgflinewidth},
  domain=180:360, samples=37, variable=\t
] plot ({\xO+\RR*cos(\t)}, {\yO+\RR*sin(\t)}, {\zO});

\draw[thick,
  preaction={draw=white, line width=2\pgflinewidth},
  domain=180:360, samples=37, variable=\t
] plot ({\xO+\RR*cos(\t)}, {\yO+\RR*sin(\t)}, {\zO+\HH});

\draw[thick, domain=0:1, samples=2, variable=\s]
  plot ({\xO-\RR}, {\yO}, {\zO + \s*\HH});
\draw[thick, domain=0:1, samples=2, variable=\s]
  plot ({\xO+\RR}, {\yO}, {\zO + \s*\HH});
 
    \begin{scope}[canvas is xy plane at z=0,transform shape]
        \node[above, font=\Large] at (0,0,0) {LTI system};
    \end{scope}

    \foreach \x in {1,...,\N} {%
        \pgfmathparse{\R*cos(\x*\step+\off)}
        \let\xcoord\pgfmathresult
        \pgfmathparse{\R*sin(\x*\step+\off)}
        \let\ycoord\pgfmathresult
        \begin{scope}[canvas is xy plane at z=4,transform shape]
        \node [circle, 
            draw,
            color=net, 
            fill=white, 
            text=black, 
            very thick,
            inner sep=6pt] 
            (O\x) at (\xcoord, \ycoord, 4) {$\mathcal{O}_\x$};
        \end{scope}
    }
    \coordinate (D) at (0,0,4);
\pgfmathanglebetweenpoints{
    \pgfpointanchor{O1}{west}}{
    \pgfpointanchor{D}{center}}
\edef\anglestart{\pgfmathresult}
\pgfmathanglebetweenpoints{
    \pgfpointanchor{O2}{north}}{
    \pgfpointanchor{D}{center}}
\edef\angleend{\pgfmathresult-11}
\begin{scope}[canvas is xy plane at z=4,transform shape]
    \foreach \x in {1,...,\N} {
    \edef\rotation{\step*(\x-1)}
    \draw[comm,rotate=\rotation] (\anglestart-180:\R) arc[radius = \R, start angle = \anglestart-180, end angle = \angleend-180];
    }
\end{scope}

\draw [comm] (O2) -- (O5);
\draw [comm] (O1) -- (O3);

\draw [gray,-latex] (N1) -- (O1) node[midway, left, align=left] {$y_1$};
\draw [gray,-latex] (N2) -- (O2) node[near start, left, align=right] {$y_2$};
\draw [gray,-latex] (N3) -- (O3) node[midway, right, align=left] {$y_3$};
\draw [gray,-latex] (N4) -- (O4) node[midway, right, align=right] {$y_4$};
\draw [gray,-latex] (N5) -- (O5) node[near start, right, align=left] {$y_5$};

\end{tikzpicture} }\\[-1ex]
    \caption{Distributed observer framework with five nodes. Each observer $\mathcal{O}_i$ uses its local measurement $y_i$ and information from neighbors via the cyan dashed communication links.} \label{fig:observerframework}
\end{figure}
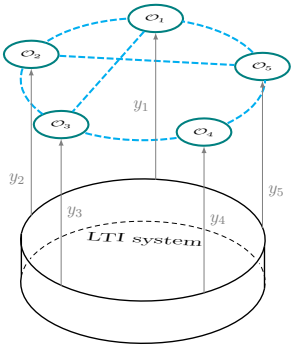

Given system \eqref{eq:sys}, the existence of a distributed estimator relies on three fundamental assumptions: joint detectability, information exchange, and graph connectivity. These assumptions are reviewed below.
\begin{assumption}[Marginally Joint Detectability] \label{assum:dect}
    The pair $(C,A)$ is detectable, where 
    \begin{equation*}
        C 
        = 
        \begin{bmatrix}
            C_1^\top & C_2^\top & \ldots & C_N^\top    
        \end{bmatrix}^\top.
    \end{equation*}
However, the pair $(\breve{C}, A)$ with $\breve{C} = \mathrm{col}_{i \in \mathbf{N} \setminus \{j\}}(C_i), \, \forall j \in \mathbf{N}$ is not detectable.
\end{assumption}
In contrast to the joint detectability condition commonly used in the literature, which only requires the detectability of $(C, A)$ and may include trivial cases with nearly detectable local nodes (e.g., $\exists i \in \mathbf{N}$ such that $(C_i, A)$ is detectable), we introduce the concept of \emph{Marginally Joint Detectability}, representing the minimal requirement for achieving joint state estimation.
\begin{assumption}[Information Exchange] \label{assum:trans}
    At each sampling instant $t$, every node communicates once and transmits only its latest local state estimate $\hat{x}_i(t)$.
\end{assumption}
%
This assumption implies that the communication frequency is identical to the system’s sampling frequency, and the dimension of the exchanged information remains unchanged. This represents the standard setting adopted in many existing distributed observer designs. In contrast, some studies assume that the dimension of transmitted information increases with the number of nodes~\cite{wang2018distributed}\footnote{Although a continuous-time system is considered in this paper, the proposed method's capability allows its direct application to both continuous- and discrete-time systems.}. A two-timescale communication framework, where multiple exchanges occur within a single sampling interval, is also common in distributed observer designs~\cite{accikmecse2014decentralized}. These configurations generally incur higher communication costs than Assumption~\ref{assum:trans}. It is also worth noting that, under a two-timescale setting, exact convergence in~\eqref{eq:full-converge} can typically be guaranteed as the communication frequency approaches infinity~\cite{yang2025state}.
%
\begin{assumption}[Graph Connectivity]\label{assum:connect}
    The communication graph $\mathcal{G}$ is a minimal strong digraph,  that is, $\mathcal{G}$ is strongly connected and, for every edge $\mathtt{e} \in \mathcal{E}$, the graph $\mathcal{G} \setminus \mathtt{e}$ is no longer strongly connected.
\end{assumption}
%
    Assumption~\ref{assum:connect} is clearly sufficient, as it ensures the propagation of information from every node to all others. However, the necessity of this assumption must be evaluated in conjunction with Assumption~\ref{assum:dect}, since, in a trivial case, a fully detectable node can reconstruct the state independently, even if it is disconnected from the rest of the network. As noted in~\cite{park2017design,mitra2018distributed}, strong connectivity of the entire sensor network is not always required; instead, it suffices that there exists a strongly connected subgraph that is jointly detectable or observable, a clear trade-off with Assumption~\ref{assum:dect}. Within such a subgraph, full state reconstruction can be achieved, while the remaining nodes can estimate the system state successfully, provided that this detectable/observable subgraph serves as their \textit{source component}.

The remainder of this paper is organized as follows: 
In Section~\ref{sec:review}, we examine the existing literature through the lens of these assumptions to highlight their implications and design trade-offs. Subsequently, in Section~\ref{sec:design}, we propose our own design, building upon these assumptions to address the identified trade-offs. Finally, we validate our design through a numerical example in Section~\ref{sec:simulate} and provide concluding remarks in Section~\ref{sec:conclude}.

\section{Literature Review}\label{sec:review}
Note that Assumptions~\ref{assum:dect}, \ref{assum:trans} and \ref{assum:connect} are intentionally stated at a ``minimal level''. A central observation from the literature is that designs that only request Assumptions~\ref {assum:dect}, \ref{assum:trans} and \ref{assum:connect} at their minimal levels simultaneously do not exist. Instead, successful methods typically require strengthening in at least one dimension: 
\begin{enumerate}
    \item they leverage a stronger local observability condition (e.g. each individual node is fully detectable/observable or each individual node combined with its first order in-neighbors is fully detectable/observable);
    \item they require richer communication (e.g., multiple information exchanges per time step or higher bandwidth);
    \item they assume stronger connectivity (e.g., a fully connected graph).
\end{enumerate}
We refer to this tension as a design trilemma (Fig.~\ref{fig:trilemma}): pushing one corner down generally forces one of the others up.
\begin{figure}[htp!]
\centering
\includegraphics[width=0.75\columnwidth]{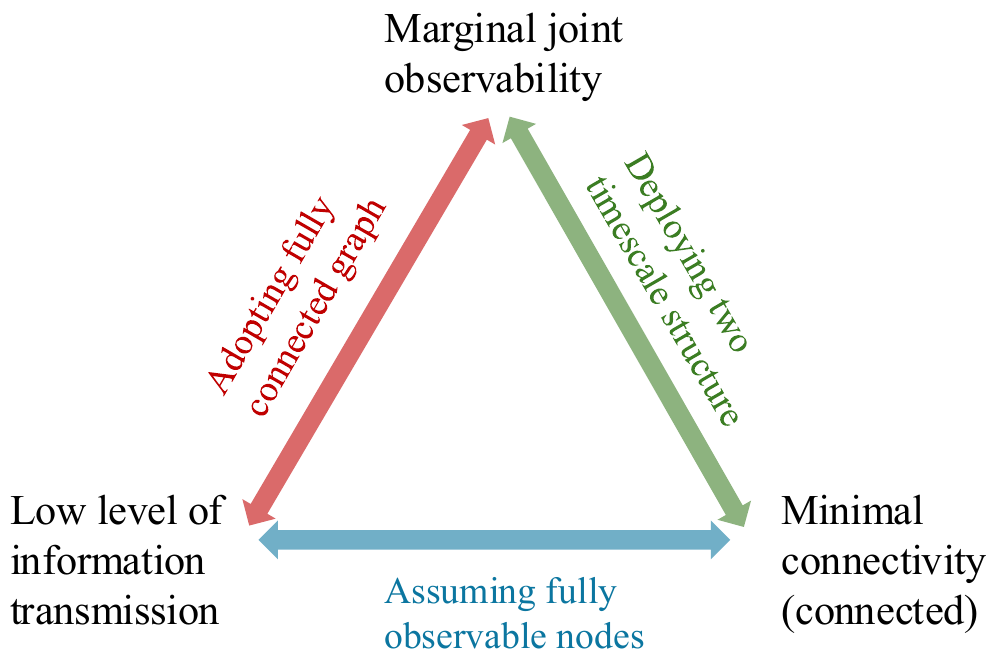}\\[-1.5ex]
\caption{Trilemma for discrete-time distributed state estimation on LTI plant with static graph.}
\label{fig:trilemma}
\end{figure}

Below, we summarize the major families of methods and explicitly state whether they invoke the assumptions at the minimal level or more restrictive versions. 

A seminal distributed Kalman filtering design was proposed by Olfati-Saber \cite{olfati2005distributed, olfati2007distributed, olfati2009kalman} and has been extended by many others (e.g., \cite{khan2011stability}). This type of design fuses local Kalman updates with consensus iterations and jointly ensures a lower variance level of the estimation error. These approaches typically violate Assumption~\ref{assum:trans} by adopting a two timescale structure: several communication rounds occur within a single sampling period to drive consensus among neighbors (Fig.~\ref{fig:timescale}). 
\begin{figure}[htp!]
\centering
\includegraphics[width=0.45\textwidth]{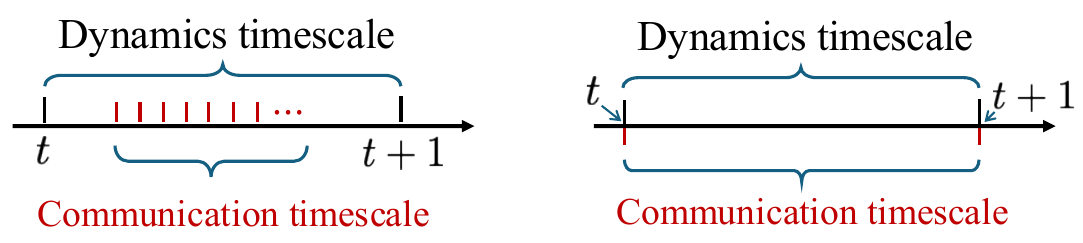}
\caption{(Left) Two timescale structure; (Right) Single timescale structure.}
\label{fig:timescale}
\end{figure}
While the extra exchanges improve fusion accuracy and can lower steady-state error covariance, the price is a higher communication frequency relative to Assumption~\ref{assum:trans}. Early variants also impose additional conditions beyond the minimal Assumption~\ref{assum:dect}. For example, the observer in \cite{olfati2005distributed} considers identical output matrices across all sensor nodes. Under such a condition, it is required that each individual node is fully observable to ensure the existence of a solution, which is much stronger than Assumption~\ref{assum:dect}.

A second stream develops Luenberger-type observers in which each node runs a local Luenberger update, coupled with inter-node correction terms. Within this family, there are two structural variants that sit on different edges of our trilemma.

Several designs embed multiple consensus iterations within each sampling period to drive agreement of estimates or innovations before the next plant update, e.g., \cite{accikmecse2014decentralized, wang2019distributed, rego2021distributed, liu2023distributed, wang2024split}. This architecture improves fusion quality without augmenting the plant state, but it violates Assumption~\ref{assum:trans} as it typically requires more than one information exchange per step. Many of these works relax the requirement that every individual sensor node should be fully observable in \cite{olfati2005distributed} because the extra exchange rounds can compensate for weaker local sensing requirements.

Other Luenberger-type approaches preserve Assumption~\ref{assum:trans} by allowing only one communication per sampling instant. Representative examples are the canonical decomposition based schemes in \cite{mitra2018distributed, mitra2022distributed}, which design local observers in coordinates aligned with the observable subspaces and couple node estimates through structured interconnections. Avoiding state augmentation and maintaining a single-round messaging exchange at the cost of stronger graph restrictions (communication flow consistent with the decomposition), which means Assumption~\ref{assum:connect} is strengthened beyond mere strong connectivity.

Another line of approaches augments the estimator states while retaining a single timescale, inspired by the decentralized control problem. In schemes such as \cite{wang2018distributed} and \cite{park2012augmented}, extra states are introduced both in the local iteration and in the communication variables at each step to ensure convergence. However, the number of augmented states typically scales up with the number of nodes, resulting in higher-dimensional information exchange, which increases the communication burden in large sensor networks.  In contrast, \cite{park2017design} also uses state-augmentation, and the total augmentation depends only on the number of source components, which are defined as strongly connected subgraphs that are jointly detectable/observable. Despite the augmentation, the dimension of information exchanged among agents remains equal to the system state, and it does not require multiple consensus iterations per sampling step (consistent with Assumption~\ref{assum:trans}). This advantage, however, comes at the expense of additional structural constraints on the communication topology and joint observability (Assumptions A1–A3 in \cite{park2017design}), alongside Assumptions~\ref{assum:dect} and~\ref{assum:connect}. 

With increasing computational power, optimization-based designs for distributed observers have attracted significant attention. In particular, distributed moving horizon estimation (MHE) has been widely studied because it naturally handles state constraints, supports robust analysis under diverse channel and measurement disturbances, and extends seamlessly to nonlinear systems. The formulation in \cite{farina2010distributed} introduces a distributed MHE that accounts for both process and measurement noise; however, it requires one-hop observability, which indicates that any nodes together with its in-degree neighboring nodes must already be observable rather than relying on observability across the entire network. In contrast, \cite{philipp2012distributed} proposed a distributed MHE applicable to an all-to-all communication architecture among estimators, which relaxes the observability requirement but clearly demands a more restrictive communication graph connectivity. Moreover, two timescale structures have been adopted in more recent distributed optimization-based estimation designs \cite{yang2025state, battistelli2018distributed, li2023iterative, pedroso2023decentralized}, where multiple communication/optimization iterations occur within each sampling period.

In summary, existing designs generally fail to satisfy Assumptions~\ref{assum:dect}, \ref{assum:trans}, and \ref{assum:connect} simultaneously at their minimal levels. In the successful approaches, at least one corner of the trilemma must be strengthened. Table~\ref{tab:literature} summarizes these trade-offs across existing methods. 
The conceptual trilemma highlights the fundamental limitation in current designs. Motivated by this insight, in the next section, we propose a novel single-timescale, non-augmented estimator that preserves state-dimension messages while trading off graph connectivity against local node observability. That is, we preserve Assumption~\ref{assum:trans}, while imposing a new condition that mixes observability and connectivity aspects. 
\begin{table}[htp!]
\begin{center}
    \begin{tabular}{SSSS} \toprule
        {Ref.} & {Observability} & {Communication} & {Connectivity} \\ \midrule
        \cite{olfati2005distributed}  & \ding{54} & \ding{51} & \ding{51}  \\
        \cite{olfati2007distributed} \cite{olfati2009kalman} \cite{khan2011stability}
        & \ding{51} & \ding{54} & \ding{51}   \\ \midrule
        \cite{accikmecse2014decentralized} 
        \cite{wang2019distributed} \cite{rego2021distributed} \cite{liu2023distributed}
        \cite{wang2024split}  & \ding{51} & \ding{54} & \ding{51}   \\
        \cite{mitra2018distributed}
        \cite{mitra2022distributed}  & \ding{51} & \ding{51} & \ding{54}   \\ \midrule
         \cite{wang2018distributed}
        \cite{park2012augmented}
        & \ding{51} & \ding{54} & \ding{51} \\ 
        *\cite{park2017design}
        & \ding{55}  & \ding{51} & \ding{55} \\\midrule
        \cite{farina2010distributed} & \ding{54} & \ding{51} & \ding{51} \\
        \cite{philipp2012distributed} & \ding{51} & \ding{51} & \ding{54} \\
        \cite{yang2025state}
        \cite{battistelli2018distributed} 
        \cite{li2023iterative} \cite{pedroso2023decentralized}  & \ding{51} & \ding{54} & \ding{51} \\ \midrule
        $\text{This paper}$ & \ding{55} & \ding{51} & \ding{55} \\\bottomrule
    \end{tabular}
\caption{
    Trilemma trade-offs of the existing literature on distributed state estimation for discrete-time LTI systems. Each entry indicates whether a method strengthens the requirements beyond the Assumptions~\ref{assum:dect}, \ref{assum:trans} and \ref{assum:connect}: \ding{51} = at/near minimal; \ding{54} = stricter than minimal; \ding{55} = trade-off with another assumption. The four horizontal blocks group corresponds to Kalman–consensus filters, Luenberger-type observers, augmented-state observers, and optimization/MHE-based designs, respectively. *\cite{park2017design}, similar to our approach, relies solely on Assumption~\ref{assum:trans} and allows a trade-off between observability and connectivity, though it tends to yield more conservative results, as will be demonstrated later. 
}
\label{tab:literature}
\end{center}
\end{table}

\section{Proposed Distributed Observer Design}\label{sec:design}
In this section, we introduce a novel framework for distributed observer design in discrete-time on a single timescale structure, as stated in Assumption~\ref{assum:trans}.

The distributed observer at Node $i$, $i\in\mathbf{N}$, is designed as
\begin{align}
        &\hat x_i(t+1) = A\left(\varpi_{ii}\hat x_i(t) + \sum_{j\in\mathcal{N}_i}\varpi_{ij}\hat{x}_j (t) \right) + Bu(t) \notag\\&\quad + L_i\big(C_i\hat{x}_i(t) - y_i(t)\big) + \sum_{j\in\mathcal{N}_i} M_{ij}\left(\hat{x}_j(t) - \hat x_i(t) \right)\label{eq:observer}
\end{align}
where $L_i\in\mathbb{R}^{n_x\times n_{y_i}}$, $M_{ij}\in\mathbb{R}^{n_x\times n_x}$, and $\varpi_{ij}\in\mathbb{R}$ with $\varpi_{ii}+\sum_{j\in\mathcal{N}_i}\varpi_{ij}=1$. Given an LTI system \eqref{eq:sys} and the underlying communication network $\mathcal{G}$, the following main theorem states the existence condition of the distributed observer $\{\mathcal{O}_i\}_{i\in\mathbf{N}}$.
\begin{theorem}\label{thm:main}
Consider an LTI system \eqref{eq:sys}, the communication among all sensor nodes is described by the graph $\mathcal{G}$ and follows Assumption~\ref{assum:trans}. For each node, the local observer \eqref{eq:observer} can reconstruct the entire system state $x(t)$ asymptotically as \eqref{eq:full-converge} if there exists a positive definite matrix $\bm Q\in\mathbb{R}^{Nn_x\times Nn_x}$ and matrices $\{\mathbf{W},\mathbf{L},\mathbf{M}\}$ such that
\begin{equation}\label{eq:nonlinearMI}
    \begin{bmatrix}
        -\bm Q&(\mathbf{W}\otimes A + \mathbf{L}\mathbf{C} - \mathbf{M})^\top\\
        \mathbf{W}\otimes A + \mathbf{L}\mathbf{C} - \mathbf{M}& -\bm Q^{-1}
    \end{bmatrix}\prec 0
\end{equation}
holds, where $\mathbf{W}=[\varpi_{ij}]_{i,j\in\mathbf{N}}\in\mathbb{R}^{N\times N}$ ($\varpi_{ij}=0$ if $(j,i)\notin\mathcal{E}$) with $\mathbf{W}\mathbf{1}_N=\mathbf{1}_N$, $\mathbf{L}=\mathrm{diag}(L_1,L_2,\cdots,L_N)\in\mathbb{R}^{Nn_x\times \sum_{i=1}^N n_{y_i}}$, $\mathbf{C}=\mathrm{diag}(C_1,C_2,\cdots,C_N)\in\mathbb{R}^{\sum_{i=1}^N n_{y_i}\times Nn_x}$, and $\mathbf{M}= [\breve{M}_{ij}]_{i,j\in\mathbf{N}}\in\mathbb{R}^{Nn_x\times Nn_x}$ with
\begin{equation*}
    \breve{M}_{ij}=\left\{\begin{aligned}
        \sum_{q\in\mathcal{N}_p}M_{pq}\quad &\text{if}\ i=j=p,\\
        -M_{ij}\quad &\text{if}\ (j,i)\in\mathcal{E},\\
        \mathbf{0}_{n\times n}\quad &\text{if}\ (j,i)\notin\mathcal{E}.
    \end{aligned}\right.
\end{equation*}
\end{theorem}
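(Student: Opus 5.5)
We derive the error dynamics, show that they are governed by the matrix appearing in \eqref{eq:nonlinearMI}, and then read \eqref{eq:nonlinearMI} as a discrete-time Lyapunov (Stein) inequality through a Schur complement.

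\textbf{Error dynamics.} Define the local errors $e_i(t)=\hat x_i(t)-x(t)$ and the stacked error $\mathbf{e}=\mathrm{col}(e_1,\dots,e_N)$.

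Since $\varpi_{ii}+\sum_{j\in\mathcal{N}_i}\varpi_{ij}=1$, we can write $Ax(t)=A(\varpi_{ii}x+\sum_j\varpi_{ij}x)$. Subtracting \eqref{eq:sys} from \eqref{eq:observer}, the input terms $Bu$ cancel. Using $C_i\hat x_i-y_i=C_ie_i$ and $\hat x_j-\hat x_i=e_j-e_i$, we obtain
\begin{equation*}
e_i(t+1)=A\Big(\varpi_{ii}e_i+\sum_{j\in\mathcal{N}_i}\varpi_{ij}e_j\Big)+L_iC_ie_i+\sum_{j\in\mathcal{N}_i}M_{ij}(e_j-e_i).
\end{equation*}
We then stack these equations over $i$. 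The first term gives $(\mathbf{W}\otimes A)\mathbf{e}$, using that $\varpi_{ij}=0$ for non-edges. The second gives $\mathbf{L}\mathbf{C}\mathbf{e}$ by the block-diagonal structure. The coupling term $\sum_j M_{ij}e_j-(\sum_q M_{iq})e_i$ is exactly $-(\mathbf{M}\mathbf{e})_i$ with the blocks $\breve M_{ij}$ defined in the statement. Hence
\begin{equation*}
\mathbf{e}(t+1)=\Phi\,\mathbf{e}(t),\qquad \Phi:=\mathbf{W}\otimes A+\mathbf{L}\mathbf{C}-\mathbf{M}.
\end{equation*}
The only care needed here is the sign and index conventions in $\breve M_{ij}$: with neighbour set $\mathcal{N}_i=\{j\,|\,(j,i)\in\mathcal{E}\}$, the diagonal block is $+\sum_q M_{iq}$ and the off-diagonal blocks are $-M_{ij}$.

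\textbf{Lyapunov argument.} Since $\bm Q\succ0$, the block $-\bm Q^{-1}\prec0$. Taking the Schur complement of \eqref{eq:nonlinearMI} with respect to this lower-right block, \eqref{eq:nonlinearMI} is equivalent to
\begin{equation*}
-\bm Q+\Phi^\top\bm Q\Phi\prec0.
\end{equation*}
Take $V(\mathbf{e})=\mathbf{e}^\top\bm Q\mathbf{e}$. Then $V(\mathbf{e}(t+1))-V(\mathbf{e}(t))=\mathbf{e}^\top(\Phi^\top\bm Q\Phi-\bm Q)\mathbf{e}\le-\epsilon\|\mathbf{e}\|^2$ for some $\epsilon>0$. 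Combined with $\lambda_{\min}(\bm Q)\|\mathbf{e}\|^2\le V\le\lambda_{\max}(\bm Q)\|\mathbf{e}\|^2$, this gives $V(t+1)\le(1-\epsilon/\lambda_{\max}(\bm Q))V(t)$, so $V$, and hence $\mathbf{e}$, decays geometrically. Equivalently, $\Phi$ is Schur stable by the standard discrete-time Lyapunov theorem.

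It follows that $\|e_i(t)\|\to0$ for every $i$, which is \eqref{eq:full-converge}. The theorem claims only sufficiency, so nothing further is required.
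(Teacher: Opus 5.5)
Your proposal is correct and follows the paper's proof essentially step for step. You derive the local error dynamics, stack them into $\mathbf{e}(t+1)=(\mathbf{W}\otimes A+\mathbf{L}\mathbf{C}-\mathbf{M})\mathbf{e}(t)$, apply the Schur complement to \eqref{eq:nonlinearMI} to obtain the Stein inequality, and conclude with the quadratic Lyapunov function $\mathbf{e}^\top\bm Q\mathbf{e}$, exactly as the paper does; your opposite sign convention for $e_i$ is immaterial, and your explicit geometric-decay bound makes the final convergence step slightly more careful than the paper's.
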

\begin{proof}
Define $e_i(t)\coloneq x(t)-\hat x_i(t)$ as the estimation error at Node $i$, then its dynamics can be written as follows
\begin{align}
        e_i(t+1) = &A\left(\varpi_{ii} e_i(t) + \sum_{j\in\mathcal{N}_i}\varpi_{ij}e_j(t)\right)+L_iC_i e_i(t)\notag\\& - \left( \sum_{j\in\mathcal{N}_i} M_{ij}\big(e_i(t) - e_j(t) \big) \right)
\end{align}
Stacking all local errors as $e\coloneq\mathrm{col}(e_1,e_2,\cdots,e_N)$, one can obtain the full error dynamics as
\begin{equation}\label{eq:error_dynamics}
    e(t+1)=(\mathbf{W}\otimes A + \mathbf{L}\mathbf{C} - \mathbf{M})e(t).
\end{equation}
Consider a Lyapunov candidate $V(t) \coloneq e(t)^\top \bm Q e(t)$.
The time difference of $V(t)$ along the error dynamics \eqref{eq:error_dynamics} follows
\begin{equation*}
    V(t+1) - V(t) = e(t)^\top \left(\mathbf{S}^\top \bm Q \mathbf{S} - \bm Q \right)e(t),
\end{equation*}
where $\mathbf{S}=\mathbf{W}\otimes A + \mathbf{L}\mathbf{C} - \mathbf{M}$. By applying the Schur Complement Lemma \cite{boyd1994linear}, the matrix inequality \eqref{eq:nonlinearMI} guarantees that
\begin{equation*}
    \mathbf{S}^\top \bm Q \mathbf{S} - \bm Q \prec 0,
\end{equation*}
which implies $V(t+1) - V(t)<0$ for all $e(t) \ne 0$. Consequently, the estimation error $e(t)$ converges asymptotically to zero.
\end{proof}

Due to the existence of a nonlinear term $\bm Q^{-1}$, it is very challenging to address the nonlinear matrix inequality \eqref{eq:nonlinearMI} directly. Therefore, we develop an iterative semi-definite programming (SDP) algorithm that transforms \eqref{eq:nonlinearMI} into a series of solvable optimization problems, which is inspired by the cone complementary linearization method \cite{mangasarian1995extended,el1997cone}.

Let $\bm P \coloneq \bm Q^{-1}$, the matrix inequality \eqref{eq:nonlinearMI} writes
    \begin{align}      
        &\begin{bmatrix}
        -\bm Q&\mathbf{S}^\top\\
        \mathbf{S}& -\bm P
    \end{bmatrix}\prec 0,\\
        &\bm P = \bm Q^{-1},\label{eq:nonconvex_constraint}\\ 
        &\mathbf{S}=\mathbf{W}\otimes A + \mathbf{L}\mathbf{C} - \mathbf{M}. \notag      
    \end{align}
Define the cone constraint
\begin{equation}\label{eq:cone_constraint}
    \mathcal{K}(\bm Q,\bm P)\coloneq\begin{bmatrix}
        \bm Q&I_{Nn_x}\\
        I_{Nn_x}&\bm P
    \end{bmatrix}\succeq 0
\end{equation}to be a convex relaxation of the nonconvex constraint \eqref{eq:nonconvex_constraint}. Subsequently, we can force the feasible solution to $\bm Q \bm P =I_{Nn_x}$, i.e. $\bm P = \bm Q^{-1}$, by minimizing $\mathrm{Tr}(\bm Q\bm P)$ through the following iterative SDP
\begin{equation}\label{eq:iter_opt}
        \mathscr{P}_k:\min_{\bm Q,\bm P, \mathbf{W}, \mathbf{L}, \mathbf{M}} \mathrm{Tr}(\bm{Q}_k \bm{P}) + \mathrm{Tr}(\bm{P}_k \bm{Q})
\end{equation}
subject to
\begin{equation}\label{eq:iter_LMI}
    \begin{aligned}
        & \begin{bmatrix}
            -\bm Q&\mathbf{S}^\top\\
        \mathbf{S}& -\bm P
        \end{bmatrix}\prec 0,\ 
        \mathbf{S}=\mathbf{W}\otimes A + \mathbf{L}\mathbf{C} - \mathbf{M},\\
        &\mathcal{K}(\bm Q,\bm P)\succeq 0,\ \bm Q\succ0,\ \mathbf{W}\mathbf{1}_N=\mathbf{1}_N.
    \end{aligned}
\end{equation}
In this optimization problem, $\mathrm{Tr}(\bm{Q}_k \bm{P}) + \mathrm{Tr}(\bm{P}_k \bm{Q})$ can be regarded as a first order symmetric linearization of $\mathrm{Tr}(\bm{Q}\bm{P})$ at $(\bm{Q}_k,\bm{P}_k)$.

\begin{algorithm}[ht]
\caption{Find matrix $\bm Q$, and calculate the gain matrices $\{\mathbf{W},\mathbf{L},\mathbf{M}\}$.}
\label{alg:cone}
\LinesNumbered
\KwIn{System matrix $A$, output matrices $C_i$, $i\in\mathbf{N}$, communication topology $\mathcal{G}$, and a small threshold $\varepsilon_{\mathrm{tol}}$.}
$k\leftarrow 0$\par
Solve the LMI problem \eqref{eq:iter_LMI}\textcolor{gray}{\Comment{Guarantee the initial value $\bm{Q}^\diamond$ and $\bm{P}^\diamond$ are in the feasible region $\mathcal{F}$.}}\par
$\bm Q_0\leftarrow \bm Q^\diamond$ and $\bm P_0\leftarrow \bm P^\diamond$\textcolor{gray}{\Comment{Get initial value.}}\par
\While{$\|\bm Q_k \bm P_k - I_{Nn_x}\|>\varepsilon_{\mathrm{tol}}$}{
Solve optimization problem $\mathscr{P}_k$ \eqref{eq:iter_opt}-\eqref{eq:iter_LMI}\par
$k\leftarrow k+1$\par
$\bm Q_k \leftarrow \bm Q$, $\bm P_k \leftarrow \bm P$\textcolor{gray}{\Comment{Update for next iteration.}}\par
\If{\eqref{eq:nonlinearMI} holds}{Break\textcolor{gray}{\Comment{Early stop if find solution of initial matrix inequality.}}}
}
\KwOut{Gain matrices $\{\mathbf{W},\mathbf{L},\mathbf{M}\}$ for distributed observer $\{\mathcal{O}_i\}_{i\in\mathbf{N}}$}
\end{algorithm}
The overall procedure is described in Algorithm~\ref{alg:cone}, while the convergence of the resulting iterative scheme is formalized in the lemma that follows.
\begin{lemma}
    Under Algorithm~\ref{alg:cone}, if the admissible set \begin{equation*}
        \mathcal{F}\coloneq\left\{(\bm Q,\bm P, \mathbf{W}, \mathbf{L}, \mathbf{M})\left|\begin{aligned}
            &\mathbf{S}=\mathbf{W}\otimes A + \mathbf{L}\mathbf{C} - \mathbf{M},\\
        &\begin{bmatrix}
            -\bm Q&\mathbf{S}^\top\\
        \mathbf{S}& -\bm P
        \end{bmatrix}\prec 0,\ \bm Q\succ0,\\
        &\mathcal{K}(\bm Q,\bm P)\succeq 0,\ \mathbf{W}\mathbf{1}_N=\mathbf{1}_N
        \end{aligned}\right.\right\}
    \end{equation*}
    is not empty, 
    we can define the sequence $J_k\coloneq\min_{(\bm Q,\bm P, \mathbf{W}, \mathbf{L}, \mathbf{M})\in\mathcal{F}}\ \mathrm{Tr}(\bm{Q}_k \bm{P}) + \mathrm{Tr}(\bm{P}_k\bm{Q})$ ($k\geq 0$), which has the following properties
    \begin{enumerate}
        \item $\{J_k\}$ is a monotonically non-increasing sequence, which converges to an optimum $J^\star$ with $J^\star\ge 2Nn_x$,
        \item If $J^\star = 2Nn_x$, every accumulation point $(\bm Q^\star,\bm P^\star, \mathbf{W}^\star, \mathbf{L}^\star, \mathbf{M}^\star)$ satisfies $\bm Q^\star\bm P^\star=I_{Nn_x}$,
        \item If there exists $(\widetilde{\bm Q},\widetilde{\bm P}, \widetilde{\mathbf{W}}, \widetilde{\mathbf{L}}, \widetilde{\mathbf{M}})\in\mathcal{F}$ with $\widetilde{\bm P}=\widetilde{\bm Q}^{-1}$ (complementary feasible point), then $J_k$ converges to $2Nn_x$ and any accumulation point satisfies $\bm Q^\star\bm P^\star=I_{Nn_x}$.
    \end{enumerate}
\end{lemma}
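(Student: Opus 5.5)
This is the standard cone complementarity linearization argument of \cite{el1997cone}, adapted to the feasible set $\mathcal{F}$. First, a key inequality on $\mathcal{F}$. The constraint $\mathcal{K}(\bm Q,\bm P)\succeq 0$ with $\bm Q\succ 0$ gives, by the Schur complement, $\bm P\succeq \bm Q^{-1}\succ 0$. Hence $\bm Q^{1/2}\bm P\bm Q^{1/2}\succeq I$ and
\begin{equation*}
\mathrm{Tr}(\bm Q\bm P)\ge Nn_x ,
\end{equation*}
with equality iff $\bm P=\bm Q^{-1}$. For any two feasible points $(\bm Q_k,\bm P_k)$ and $(\bm Q,\bm P)$ we also get
\begin{equation*}
\mathrm{Tr}(\bm Q_k\bm P)+\mathrm{Tr}(\bm P_k\bm Q)\ge \mathrm{Tr}(\bm Q_k\bm Q^{-1})+\mathrm{Tr}(\bm Q_k^{-1}\bm Q).
\end{equation*}
Write $X=\bm Q_k^{-1/2}\bm Q\bm Q_k^{-1/2}\succ 0$. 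The right-hand side equals $\mathrm{Tr}(X^{-1}+X)\ge 2Nn_x$, since each eigenvalue $\lambda$ of $X$ satisfies $\lambda+1/\lambda\ge 2$. This gives $J_k\ge 2Nn_x$ for every $k$.

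For item 1, the iterate $(\bm Q_{k+1},\bm P_{k+1})$ is the minimizer of $\mathscr{P}_k$, so it is feasible for $\mathscr{P}_{k+1}$. Using symmetry of the linearized objective,
\begin{equation*}
J_{k+1}\le \mathrm{Tr}(\bm Q_{k+1}\bm P_{k+1})+\mathrm{Tr}(\bm P_{k+1}\bm Q_{k+1})=2\,\mathrm{Tr}(\bm Q_{k+1}\bm P_{k+1}).
\end{equation*}
Also $J_k=\mathrm{Tr}(\bm Q_k\bm P_{k+1})+\mathrm{Tr}(\bm P_k\bm Q_{k+1})$, because the $(k{+}1)$-th point attains the minimum of $\mathscr{P}_k$. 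Evaluating $\mathscr{P}_k$ at the point $(\bm Q_k,\bm P_k)$ itself gives $J_k\le 2\mathrm{Tr}(\bm Q_k\bm P_k)$. To get $J_{k+1}\le J_k$, I plan to use the convexity bound of \cite{el1997cone}:
\begin{equation*}
2\,\mathrm{Tr}(\bm Q_{k+1}\bm P_{k+1})\le \mathrm{Tr}(\bm Q_k\bm P_{k+1})+\mathrm{Tr}(\bm P_k\bm Q_{k+1}) .
\end{equation*}
This is the step I expect to be the \textbf{main obstacle}. The function $\mathrm{Tr}(\bm Q\bm P)$ is not convex, so the bound is not automatic. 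I would first try to obtain it from optimality of $\mathscr{P}_k$ compared against convex combinations $\theta(\bm Q_{k+1},\bm P_{k+1})+(1-\theta)(\bm Q_k,\bm P_k)\in\mathcal{F}$, which lie in $\mathcal{F}$ since $\mathcal{F}$ is convex. If that fails, I would fall back on the weaker chain
\begin{equation*}
J_{k+1}\le 2\mathrm{Tr}(\bm Q_{k+1}\bm P_{k+1})\quad\text{and}\quad J_k\le 2\mathrm{Tr}(\bm Q_k\bm P_k),
\end{equation*}
and argue monotonicity of the merit $\mathrm{Tr}(\bm Q_k\bm P_k)$ instead, as is done in \cite{el1997cone}. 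Once monotonicity holds, $\{J_k\}$ is non-increasing and bounded below by $2Nn_x$, so it converges to some $J^\star\ge 2Nn_x$.

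For item 2, take a convergent subsequence, passing to the closure of $\mathcal{F}$ if needed. At the limit point we have
\begin{equation*}
2Nn_x=J^\star\ge \mathrm{Tr}(\bm Q^\star(\bm Q^\star)^{-1})+\mathrm{Tr}((\bm Q^\star)^{-1}\bm Q^\star)=2Nn_x .
\end{equation*}
So equality holds in the eigenvalue bound $\lambda+1/\lambda\ge 2$ and in $\bm P\succeq\bm Q^{-1}$. This forces $\bm P^\star=(\bm Q^\star)^{-1}$, that is, $\bm Q^\star\bm P^\star=I_{Nn_x}$.

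For item 3, let $(\widetilde{\bm Q},\widetilde{\bm P},\ldots)\in\mathcal{F}$ be a complementary feasible point. At it, $\mathrm{Tr}(\widetilde{\bm Q}\widetilde{\bm P})=Nn_x$ is the global minimum of the concave-in-linearization merit. Concavity of the objective along the linearization direction then implies that a stationary value $J^\star>2Nn_x$ cannot persist: comparing against $\widetilde{\bm Q}$ would strictly decrease the linear objective. Hence $J^\star=2Nn_x$, and item 2 gives the conclusion. I expect this part to need the same first-order inequality as item 1 and to inherit its difficulty.
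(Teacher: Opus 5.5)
\textbf{There is a genuine gap in item 1 (monotonicity).} Your lower bound $J_k\ge 2Nn_x$ is correct. Using $\bm P\succeq\bm Q^{-1}$, $\bm P_k\succeq\bm Q_k^{-1}$ and the eigenvalue inequality for $X=\bm Q_k^{-1/2}\bm Q\bm Q_k^{-1/2}$ is more elementary than the paper's closed-form minimization. The gap is that you compare $\mathscr{P}_{k+1}$ against the wrong feasible point. Evaluating it at $(\bm Q_{k+1},\bm P_{k+1})$ only yields $J_{k+1}\le 2\mathrm{Tr}(\bm Q_{k+1}\bm P_{k+1})$, which is why you then need an extra inequality.

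The missing idea is to evaluate $\mathscr{P}_{k+1}$ at the \emph{previous} iterate $(\bm Q_k,\bm P_k)\in\mathcal{F}$. The linearized objective is symmetric in the linearization point and the decision variable, so
\begin{equation*}
J_{k+1}\le \mathrm{Tr}(\bm Q_{k+1}\bm P_k)+\mathrm{Tr}(\bm P_{k+1}\bm Q_k)=\mathrm{Tr}(\bm Q_k\bm P_{k+1})+\mathrm{Tr}(\bm P_k\bm Q_{k+1})=J_k .
\end{equation*}
This is the step behind the paper's chain $J_k\le J_{k-1}$, and it is the actual argument in \cite{el1997cone}; no convexity is involved.

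The ``convexity bound'' $2\mathrm{Tr}(\bm Q_{k+1}\bm P_{k+1})\le \mathrm{Tr}(\bm Q_k\bm P_{k+1})+\mathrm{Tr}(\bm P_k\bm Q_{k+1})$ does not follow from optimality. With $\Delta_Q=\bm Q_{k+1}-\bm Q_k$ and $\Delta_P=\bm P_{k+1}-\bm P_k$, it reads $\mathrm{Tr}(\bm Q_k\Delta_P)+\mathrm{Tr}(\bm P_k\Delta_Q)+2\mathrm{Tr}(\Delta_Q\Delta_P)\le 0$. Optimality of $\mathscr{P}_k$ controls only the first two terms, whether you compare against $(\bm Q_k,\bm P_k)$ or against convex combinations (for a linear objective the latter gives nothing new). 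Your fallback fails as well: $J_{k+1}\le 2\mathrm{Tr}(\bm Q_{k+1}\bm P_{k+1})$ and $J_k\le 2\mathrm{Tr}(\bm Q_k\bm P_k)$ are both upper bounds, so they cannot order $J_{k+1}$ and $J_k$, whatever $\mathrm{Tr}(\bm Q_k\bm P_k)$ does.

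\textbf{Items 2 and 3 also do not go through as written.} $J_k$ couples the consecutive iterates $(\bm Q_k,\bm P_k)$ and $(\bm Q_{k+1},\bm P_{k+1})$. Your chain in item 2 therefore tacitly replaces $J^\star$ by $2\mathrm{Tr}(\bm Q^\star\bm P^\star)$, but the only relation you have, $J_k\le 2\mathrm{Tr}(\bm Q_k\bm P_k)$, points the wrong way. The paper avoids this by eliminating the free variable through $\bm Q\succeq\bm P^{-1}$ and minimizing in closed form. That gives $J_k\ge 2\mathrm{Tr}((\bm P_k^{1/2}\bm Q_k\bm P_k^{1/2})^{1/2})\ge 2Nn_x$, a bound in the current iterate alone, with equality iff $\bm Q_k\bm P_k=I_{Nn_x}$. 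Alternatively, your own decomposition can be repaired: $J_k\to 2Nn_x$ forces $\bm Q_k^{-1/2}\bm Q_{k+1}\bm Q_k^{-1/2}\to I$ and $\mathrm{Tr}((\bm P_k-\bm Q_k^{-1})\bm Q_{k+1})\to 0$. This gives $\bm P^\star=(\bm Q^\star)^{-1}$ along convergent subsequences.

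For item 3, comparing with the complementary point cannot work. The objective of $\mathscr{P}_k$ evaluated at $(\widetilde{\bm Q},\widetilde{\bm P})$ is at least $J_k$ by definition of the minimum, so it cannot push $J_k$ down. Be aware that the paper's own argument for item 3 has the same weakness. It shows that the linearization at $(\widetilde{\bm Q},\widetilde{\bm P})$ attains $2Nn_x$ and then invokes monotonicity. That alone does not force the linearizations at $(\bm Q_k,\bm P_k)$ to approach $2Nn_x$: for general convex feasible sets, CCL can stall at a stationary point of $\mathrm{Tr}(\bm Q\bm P)$ with value above the complementary one. So do not expect to close item 3 along these lines.
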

\begin{proof}
    Each optimization problem of Algorithm~\ref{alg:cone} is an SDP with a linear objective function over the convex set $\mathcal{F}$ satisfying Slater's condition. Thus, there exists a minimizer in $\mathcal{F}$.

    (Monotonicity) By definition, $(\bm Q_{k+1},\bm P_{k+1})$ is an optimal solution of $\mathscr{P}_k$. Hence, we have
    \begin{align*}
        J_k = &\mathrm{Tr}(\bm{Q}_k \bm{P}_{k+1}) + \mathrm{Tr}(\bm{P}_k\bm{Q}_{k+1})\\
        &\quad\leq \mathrm{Tr}(\bm{Q}_{k-1} \bm{P}_{k}) + \mathrm{Tr}(\bm{P}_{k-1}\bm{Q}_{k})=J_{k-1},
    \end{align*}
    hence $\{J_k\}$ is monotonically non-increasing.

    (Lower bound of convergence) For any fixed $(\bm Q_k,\bm P_k)$ and any $(\bm Q,\bm P)\in\mathcal{F}$, we have
    \begin{align*}
        \underset{\mathcal{K}(\bm Q,\bm P)\succeq 0}{\mathrm{inf}} \mathrm{Tr}(\bm{Q}_k \bm{P}) &+ \mathrm{Tr}(\bm{P}_k\bm{Q})\\ &= \underset{\bm P\succ 0}{\mathrm{inf}}\ \mathrm{Tr}(\bm{Q}_k \bm{P}) + \mathrm{Tr}(\bm{P}_k\bm{P}^{-1}),
    \end{align*}
    where the infimum is obtained at $\bm Q = \bm{P}^{-1}$.

    Define $f(\bm P) = \mathrm{Tr}(\bm{Q}_k \bm{P}) + \mathrm{Tr}(\bm{P}_k\bm{P}^{-1})$. Calculating the derivative of this function, one can obtain
    \begin{equation*}
        \nabla f(\bm P) = \bm Q_k -\bm P^{-1} \bm P_k \bm P^{-1}.
    \end{equation*}
    In accordance with the optimality condition $ \nabla f(\bm P)=\mathbf{0}$, we can get
    \begin{equation*}
        \bm Q_k = \bm P^{-1} \bm P_k \bm P^{-1} \Longleftrightarrow  \bm P \bm Q_k \bm P = \bm P_k,
    \end{equation*}
    which has a unique solution
    \begin{equation*}
        \bm P^\flat = \bm P_k^{\frac{1}{2}} \left(\bm P_k^{\frac{1}{2}}\bm Q_k\bm P_k^{\frac{1}{2}} \right)^{-\frac{1}{2}} \bm P_k^{\frac{1}{2}}.
    \end{equation*}
 Substituting it into the function $f(\bm P)$, we can obtain
    \begin{align*}
        \underset{\mathcal{K}(\bm Q,\bm P)\succeq 0}{\mathrm{inf}} &\mathrm{Tr}(\bm{Q}_k \bm{P}) + \mathrm{Tr}(\bm{P}_k\bm{Q}) \\&= 
        \underset{\bm P\succ 0}{\mathrm{inf}}\ f(\bm P) = 2\mathrm{Tr}\left(\left(\bm P_k^{\frac{1}{2}}\bm Q_k\bm P_k^{\frac{1}{2}} \right)^{\frac{1}{2}}\right).
    \end{align*}
    Since $\mathcal{K}(\bm Q,\bm P)\succeq 0$ holds, Schur Complement implies that
    $\bm P_k^{\frac{1}{2}}\bm Q_k\bm P_k^{\frac{1}{2}}\succeq I_{Nn_x}$, which leads to the following result
    \begin{equation}\label{eq:inf_QP}
         \mathrm{Tr}(\bm{Q}_k \bm{P}) + \mathrm{Tr}(\bm{P}_k\bm{Q}) \geq 2 Nn_x.
    \end{equation}
    Minimizing over $(\bm Q,\bm P)$ results in $J_k\geq 2Nn_x$, which proves $\lim_{k\rightarrow \infty}J_k=J^\star\geq 2 Nn_x$ together with the monotonicity.

    (Equality property) If $\lim_{k\rightarrow \infty}J_k=2 Nn_x$, then equality must hold in both \eqref{eq:inf_QP} and $\bm P_k^{\frac{1}{2}}\bm Q_k\bm P_k^{\frac{1}{2}}\succeq I_{Nn_x}$. $\bm P_k^{\frac{1}{2}}\bm Q_k\bm P_k^{\frac{1}{2}} = I_{Nn_x}$ holds if and only if $\bm Q_k \bm P_k=I_{Nn_x}$. Passing to any convergent subsequence yields $\bm Q^\star\bm P^\star=I_{Nn_x}$.

    (Complementarity feasibility) If there exists a feasible pair  $(\widetilde{\bm Q},\widetilde{\bm P})\in\mathcal{F}$ satisfying $\widetilde{\bm P}=\widetilde{\bm Q}^{-1}$, then for any other feasible pair  $(\hat{\bm Q},\hat{\bm P})$, $\mathrm{Tr}(\hat{\bm Q} \widetilde{\bm P})+\mathrm{Tr}(\hat{\bm P}\widetilde{\bm Q})\geq 2Nn_x$, with equality achieved when $\hat{\bm Q}=\widetilde{\bm P}^{-1}$, $\hat{\bm P}=\widetilde{\bm Q}^{-1}$. Therefore, all optimal values satisfy $J_k\geq2Nn_x$, and this lower bound is attainable. Combining this result with the monotonicity of $\{J_k\}$, we can obtain $\lim_{k\rightarrow\infty}J_k=2Nn_x$. According to the ``Equality property'', it follows that every accumulation point satisfies $\bm Q^\star\bm P^\star=I_{Nn_x}$.
\end{proof}
\begin{remark}\label{rem:1}
Theorem \ref{thm:main} establishes that the gain matrices ${\mathbf{W},\mathbf{L},\mathbf{M}}$ for the proposed distributed observer \eqref{eq:observer} exist if the matrix inequality \eqref{eq:nonlinearMI} is satisfied (sufficiency). When this condition holds, the matrices can be computed via Algorithm~\ref{alg:cone}. The inequality condition \eqref{eq:nonlinearMI} inherently couples the requirements for observability and connectivity. Compared with \cite{park2017design}, the key innovation lies in the newly introduced Laplacian-type coupling $M_{ij}$, which introduces an additional degree of freedom to enhance the contraction of non-average network modes beyond $\mathbf{W}\otimes A + \mathbf{L}\mathbf{C}$. This flexibility is particularly beneficial when the plant dynamics (i.e., $A$) are ``fast,'' thereby extending the applicability boundary of \cite{park2017design}, as illustrated by the simulation example in Section~\ref{sec:simulate}.  
\end{remark}

\section{Simulation Examples}\label{sec:simulate}
To validate the proposed framework and the trilemma, we consider a discrete-time LTI system of the form~\eqref{eq:sys} with
\begin{align}
    &A=\begin{bmatrix}
-1.5 & 0 & 0 & 0 & 0 \\ 
0 & 10 & 0 & 0 & 0 \\ 
0 & 0 & -2 & 0 & 0 \\ 
0 & 0 & 0 & -3 & 0 \\ 
0 & 0 & 0 & 0 & -2
\end{bmatrix},\ B=\begin{bmatrix}
1 & 0 \\ 
0 & 1 \\ 
0 & 1 \\ 
1 & 0 \\ 
1 & 0
\end{bmatrix}.\label{eq:simu_AB}
\end{align}
Five sensing nodes are deployed along the system, with output matrices given by
\begin{equation}
\begin{aligned}
&C_1= \begin{bmatrix}
1 & 0 & 0 & 0 & 0
\end{bmatrix},\ C_2=\begin{bmatrix}
0 & 1 & 0 & 0 & 0
\end{bmatrix},\\
&C_3 = \begin{bmatrix}
0 & 0 & 1 & 0 & 0
\end{bmatrix},\ C_4= \begin{bmatrix}
0 & 0 & 0 & 1 & 0
\end{bmatrix},\\
&C_5=\begin{bmatrix}
0 & 0 & 0 & 0 & 1
\end{bmatrix}.
\end{aligned}
\label{eq:simu_C}
\end{equation}
so that the overall pair $(C,A)$ is marginally jointly detectable (see Assumption~\ref{assum:dect}).
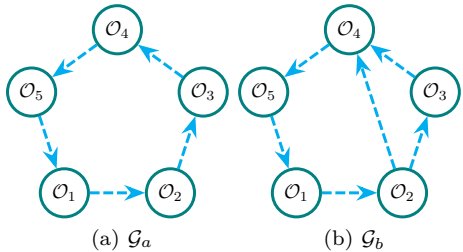
\begin{figure}[htp!]
    \centering
    \subfloat[$\mathcal{G}_a$]{\scalebox{0.94}{\begin{tikzpicture}
        \begin{scope}[every node/.style={scale=0.8,circle,very thick,draw,color=teal, text=black}] 
            \node (1) at (0,0) {$\mathcal O_1$};
            \node (2) at (1.5,0) {$\mathcal O_2$};
            \node (3) at (1.9635,1.4266) {$\mathcal O_3$};
            \node (4) at (0.7500,2.3083) {$\mathcal O_4$};
            \node (5) at (-0.4635,1.4266) {$\mathcal O_5$};
        \end{scope}
        \begin{scope}[>={Stealth[cyan]},
                      every edge/.style={draw=cyan, very thick, dash pattern=on 4pt off 1.5pt}] 
        \path [->] (1) edge node {} (2);
        \path [->] (2) edge node {} (3);
        \path [->] (3) edge node {} (4);
        \path [->] (4) edge node {} (5);
        \path [->] (5) edge node {} (1);
        \end{scope}
    \end{tikzpicture}}\label{fig:topo_a}}
    \subfloat[$\mathcal{G}_b$]{\scalebox{0.94}{\begin{tikzpicture}
        \begin{scope}[every node/.style={scale=0.8,circle,very thick,draw,color=teal, text=black}] 
            \node (1) at (0,0) {$\mathcal O_1$};
            \node (2) at (1.5,0) {$\mathcal O_2$};
            \node (3) at (1.9635,1.4266) {$\mathcal O_3$};
            \node (4) at (0.7500,2.3083) {$\mathcal O_4$};
            \node (5) at (-0.4635,1.4266) {$\mathcal O_5$};
        \end{scope}
        \begin{scope}[>={Stealth[cyan]},
                      every edge/.style={draw=cyan, very thick, dash pattern=on 4pt off 1.5pt}] 
        \path [->] (1) edge node {} (2);
        \path [->] (2) edge node {} (3);
        \path [->] (3) edge node {} (4);
        \path [->] (2) edge node {} (4);
        \path [->] (4) edge node {} (5);
        \path [->] (5) edge node {} (1);
        \end{scope}
    \end{tikzpicture}}\label{fig:topo_b}}\\[-1ex]
    \caption{Communication topologies for simulation examples}
    \label{fig:simu_topo}
\end{figure}
The initial communication graph $\mathcal{G}_a$ is a minimal strongly connected digraph (Fig.~\ref{fig:topo_a}), corresponding to the lowest connectivity level. Under this setting, Algorithm~\ref{alg:cone} fails to yield feasible gain matrices $\{\mathbf{W},\mathbf{L},\mathbf{M}\}$ satisfying~\eqref{eq:nonlinearMI}. According to the trilemma, to recover feasibility, we can strengthen the sensing at Node 4 with
\begin{equation*}
    C_4^{\prime} = \begin{bmatrix}
0 & 0 & 0 & 1 & 0 \\ 
0 & 1 & 0 & 0 & 0
\end{bmatrix}.
\end{equation*}
With this improvement, Algorithm~\ref{alg:cone} successfully yields feasible gains $\{\mathbf{W},\mathbf{L},\mathbf{M}\}$ (provided in the supplementary document\footnote{\label{docm:parameter}\href{https://github.com/RuixuanZhaoEEEUCL/ECC2026.git}{https://github.com/RuixuanZhaoEEEUCL/ECC2026.git}} due to space limitations), and all resulting estimation errors asymptotically vanish (Fig.~\ref{fig:error_a}). This confirms that improved observability alone can recover feasibility.
\begin{figure}[htp!]
    \centering
    \subfloat[Enhance observability]{\includegraphics[width=0.24\textwidth]{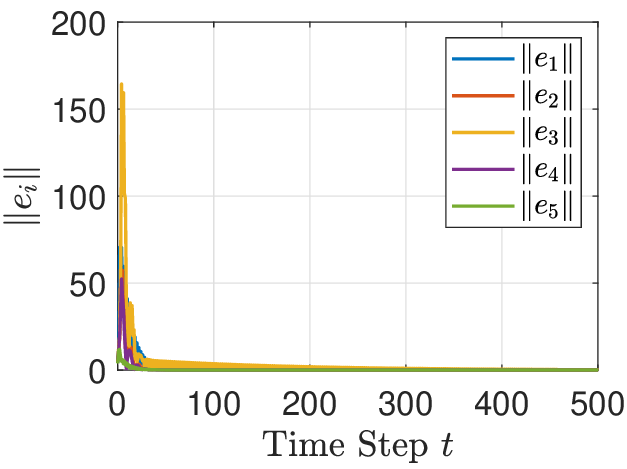}\label{fig:error_a}}
    \subfloat[Enhance connectivity]{\includegraphics[width=0.24\textwidth]{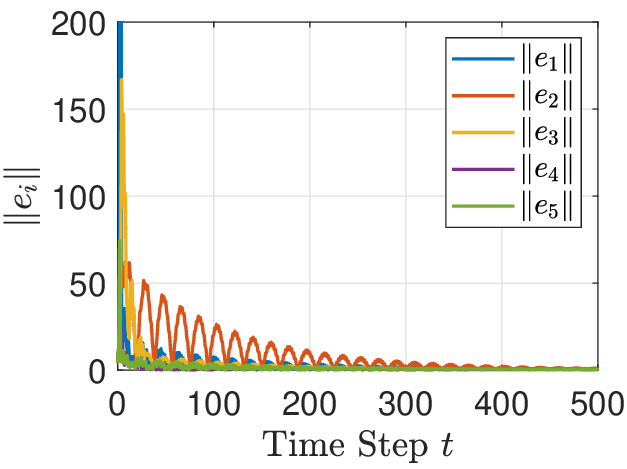}\label{fig:error_b}}\\[-1ex]
    \caption{Norm of estimation errors at each node for (a) enhanced observability, and (b) enhanced network connectivity.}
    \label{fig:estimation_error}
\end{figure}

Alternatively, we preserve the original sensors but add one directed edge to obtain $\mathcal{G}_b$, as shown in Fig.~\ref{fig:topo_b}. The increased connectivity  restores feasibility without modifying any $C_i$, feasible gain matrices $\{\mathbf{W},\mathbf{L},\mathbf{M}\}$ can be obtained by Algorithm~\ref{alg:cone}, (also provided in the supplementary document\textsuperscript{\ref{docm:parameter}}), and the resulting observer can ensure asymptotic convergence of the estimation error, as illustrated in Fig.~\ref{fig:error_b}.

Taken together, the two simulations demonstrate that feasibility can be restored by enhancing either observability or network connectivity (while keeping the information exchange pattern fixed), confirming the trilemma and the implicit coupling captured in Theorem~\ref{thm:main}.

Moreover, it can be observed that the same system described by \eqref{eq:simu_AB}--\eqref{eq:simu_C} under topology $\mathcal{G}_b$ cannot be handled by \cite{park2017design}, as the required design conditions in those works are violated. This comparison clearly demonstrates the broader generality and enhanced applicability of the proposed scheme.

\section{Conclusions and Future Works}\label{sec:conclude}
Building upon existing research in discrete-time distributed observers, this paper identifies a fundamental trilemma that captures the intrinsic trade-off among observability, connectivity, and communication frequency. To alleviate the stringent conditions commonly imposed in prior works, we propose a unified single-timescale design framework that inherently couples observability and connectivity requirements, leading to a nonlinear matrix inequality (NMI) formulation.  
The resulting NMI was effectively addressed through an iterative semidefinite programming (SDP) approach inspired by cone complementarity linearization. Simulation results verified the validity and enhanced the generality of the proposed design compared with existing methods.  
Future research will focus on establishing explicit analytical relationships within the identified trilemma and examining the necessary conditions of the design. 


\end{document}